\journalname{Communications in Mathematical Physics}
\newlength{\Taille}
\newcommand{\re}{\mathrm{Re}}
\newcommand{\im}{\mathrm{Im}}
\newcommand{\flechebas}[1]{
  \settoheight{\unitlength}{\mbox{$#1$}}
  \settowidth{\Taille}{\mbox{~${\scriptstyle #1}$}}
  \addtolength{\unitlength}{4ex}
  \begin{picture}(0,1)
    \put(0,1){\vector(0,-1){1}}
    \put(0,0.5){\makebox(0,0){${\scriptstyle #1}$ \hspace{\the\Taille}}}
  \end{picture}}
\newcommand{\flechehaut}[1]{
  \settoheight{\unitlength}{\mbox{$#1$}}
  \settowidth{\Taille}{\mbox{~${\scriptstyle #1}$}}
  \addtolength{\unitlength}{4ex}
  \begin{picture}(0,1)
    \put(0,0){\vector(0,1){1}}
    \put(0,0.5){\makebox(0,0){\hspace{\the\Taille}${\scriptstyle #1}$ }}
  \end{picture}}
\newcommand{\flechedroite}[1]{
  \settowidth{\unitlength}{\mbox{$#1$}}
  \settoheight{\Taille}{\mbox{${\scriptstyle #1}$}}
  \addtolength{\Taille}{1ex}
  \addtolength{\unitlength}{4ex}
  \raisebox{0.5ex}{
  \begin{picture}(1,0)
    \put(0,0){\vector(1,0){1}}
    \put(0.5,0){\makebox(0,0){${\scriptstyle #1}$ \vspace{\the\Taille}}}
  \end{picture}}}
\newcommand{\flechegauche}[1]{
  \settowidth{\unitlength}{\mbox{$#1$}}
  \settoheight{\Taille}{\mbox{${\scriptstyle #1}$}}
  \addtolength{\Taille}{1ex}
  \addtolength{\unitlength}{4ex}
  \raisebox{0.5ex}{
  \begin{picture}(1,0)
    \put(1,0){\vector(-1,0){1}}
    \put(0.5,0){\makebox(0,0){${\scriptstyle #1}$ \vspace{\the\Taille}}}
  \end{picture}}}
\newcommand{\RM}{\mathbb{R}}
\newcommand{\ZM}{\mathbb{Z}}
\newcommand{\NM}{\mathbb{N}}
\newcommand{\CM}{\mathbb{C}}
\DeclareMathOperator{\sign}{sign}
\newcommand{\fracsm}[2]{\begin{matrix}\frac{#1}{#2}\end{matrix}}
\DeclareMathOperator{\vol}{vol}
\begin{document}

\title{{Dimensional asymptotics of effective actions on $S^n$, and proof of B\"{a}r-Schopka's conjecture}}
\titlerunning{Dimensional asymptotics of effective actions on $S^n$}

\author{Niels Martin M{\o}ller\inst{1}}
\institute{Department of Mathematics, University of Aarhus,
  Ny Munkegade Bld. 530, Denmark.\\ \email{nmm@imf.au.dk}}
\authorrunning{Niels Martin M\o{}ller}

\date{1 September 2007}

\maketitle
\begin{abstract}
We study the dimensional asymptotics of the effective actions, or functional determinants, for the Dirac operator $D$ and Laplacians $\Delta +\beta R$ on round $S^n$. For Laplacians the behavior depends on ``the coupling strength'' $\beta$, and one cannot in general expect a finite limit of $\zeta'(0)$, and for the ordinary Laplacian, $\beta=0$, we prove it to be $+\infty$, for odd dimensions. For the Dirac operator, B\"{a}r and Schopka conjectured a limit of unity for the determinant (\cite{BS}), i.e.
\[
\lim_{n\to\infty}\det(D, S^n_{\mathrm{can}})=1.
\]

We prove their conjecture rigorously, giving asymptotics, as well as a pattern of inequalities satisfied by the determinants. The limiting value of unity is a virtue of having ``enough scalar curvature'' and no kernel. Thus for the important (conformally covariant) Yamabe operator, $\beta=(n-2)/(4(n-1))$, the determinant tends to unity.

For the ordinary Laplacian it is natural to rescale spheres to unit volume, since
\[
\lim_{k\to\infty}\det(\Delta, S_\mathrm{rescaled}^{2k+1})=\frac{1}{2\pi e}.
\]
\end{abstract}

\section{Introduction}
\label{sec:Introduction}
Ever since spectral zeta functions of natural geometric elliptic differential operators on manifolds started appearing in mathematics (\cite{RS1}, \cite{RS2}) and in the regularization of path integrals (\cite{Ha}), there has been interest in calculating values of the associated zeta determinants. Explicit formulae on various spaces have been derived (\cite{BrFuncDet}, \cite{BS}, \cite{DK}, \cite{Dow1}, \cite{Dow2}, \cite{Ki}, \cite{Va}, \cite{We1}, \cite{We2}), and some authors have displayed numerical features of the behavior of such determinants, culminating in the formulation of Conjecture 1 in \cite{BS}, concerning the limit of the determinants of the Dirac operator on standard $n$-dimensional spheres, as the dimension $n$ grows large. Thus it seems an opportune time for investigating rigorously the asymptotics, using the previously established formulae for determinants. Contrary to what one might think, the rather complicated looking series, which involve Barnes zeta functions and generalized Bernoulli and Stirling numbers, can often be understood via quite elementary methods.

In Section \ref{sec:Dirac} we find asymptotics of the effective action for the Dirac operator, and of the phase factor sometimes included. As a corollary this resolves the questions raised in the form of Conjecture 1 in \cite{BS}. The key point in the proof is the use of
explicit formulas by Thomas Branson, B\"{a}r-Schopka and J. S. Dowker, and application of recursion to control the indirectly defined special functions and polynomial coefficients that appear, along with the functional equation for the Riemann zeta function $\zeta_R$ and some elementary estimates of $\zeta_R$ (\cite{BrFuncDet}, \cite{BS}, \cite{Dow1}, \cite{Dow2}. See also the book by K. Kirsten
\cite{Ki}).

In Section \ref{sec:Laplace} for the Laplace-Beltrami operator on functions, more refined methods are
needed. The reason seems to be that the ordinary Laplacian is not a ``natural object'' to consider. While an approach using polynomial
coefficients such as (generalized) Stirling and Bernouilli numbers works well for the more natural conformally covariant Yamabe and Dirac operators, where in fact the zeta function and derivative at zero tend rapidly to zero, then in the cases when it tends to $\infty$, formidable cancellations amongst large terms should occur in such sum formulas. Namely for Dirac and Yamabe the convergence holds with absolute signs inside the Bernoulli/Stirling expansion sums, while for the ordinary Laplacian they do certainly not. As shown below, the zeta derivative at zero of the ordinary Laplacian goes to $+\infty$ as the logarithm of the dimension, while the same quantity for Dirac and Yamabe operators converges exponentially to zero.

The proof in the ordinary Laplacian case instead relies on estimating
a broad range of Barnes zeta functions through convenient contour
integral representations. Namely, by exploiting geometry of the
situation it is possible to deform the basic Hankel contour to make
use of a hybrid of Laplace's method and the method of stationary phase, the idea of which is exactly
that of cancellation of increasingly rapid oscillations. We note that
the standard formulations of the method of stationary phase and
Laplace's method for contour integrals (see for example \cite{Ol})
do not seem feasible for the estimates needed here. What is needed is a mix of these standard tools, due to phase factors which are neither real nor purely imaginary. In this respect
the present paper may also turn out useful for treating similar problems on contour integrals.

\section{Dirac operator sphere determinants}
\label{sec:Dirac}
As in \cite{BS}, we define the Dirac operator determinant including a phase as follows.
\begin{definition}
\begin{equation}
\det(D):=\exp\Big(i\frac{\pi}{2}\big(\zeta_{D^2}(0)-\eta_D(0)\big)\Big)\exp\Big(-\frac{1}{2}\zeta^{\,\prime}_{D^2}(0)\Big).
\end{equation}
\end{definition}
To state the theorem, we fix the notation that $\varphi_D=\frac{\pi}{2}\zeta_{D^2}(0)-\eta_D(0)$. The theorem in particular proves Conjecture 1 of \cite{BS}.

\begin{theorem}
For the standard round spheres we have:
\begin{equation}
\lim_{n\to\infty}\det(D, S^n)=1.
\end{equation}
In fact we have the following asymptotics
\[
\begin{split}
|\det(D, S^n)|&=\exp\Big(O\big((\fracsm{3}{4})^n\big)\Big),\\
\varphi_D&=O\big((\fracsm{3}{4})^n\big).
\end{split}
\]
Furthermore we have the inequalities
\begin{equation}
\begin{split}
&|\det(D, S^n)|< 1,\quad\varphi_D>0,\quad\textrm{if}\quad n\equiv0\:\:\textrm{(mod 4)},\\
&|\det(D, S^n)|> 1,\quad\varphi_D=0,\quad\textrm{if}\quad n\equiv1\:\:\textrm{(mod 4)},\\
&|\det(D, S^n)|> 1,\quad\varphi_D<0,\quad\textrm{if}\quad n\equiv2\:\:\textrm{(mod 4)},\\
&|\det(D, S^n)|< 1,\quad\varphi_D=0,\quad\textrm{if}\quad n\equiv3\:\:\textrm{(mod 4)}.
\end{split}
\end{equation}
\end{theorem}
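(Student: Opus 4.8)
The plan is to analyze separately the modulus and the phase of $\det(D,S^n)$. By the Definition, $|\det(D,S^n)|=\exp(-\tfrac12\zeta'_{D^2}(0))$, while the phase is governed by $\varphi_D=\tfrac\pi2\zeta_{D^2}(0)-\eta_D(0)$. So I must understand the three quantities $\zeta_{D^2}(0)$, $\zeta'_{D^2}(0)$ and $\eta_D(0)$ as $n\to\infty$. The input is the explicit spectral resolution of $D$ on the round $S^n$: the eigenvalues are $\pm(\tfrac n2+k)$, $k\ge 0$, with multiplicities $2^{\lfloor n/2\rfloor}\binom{n-1+k}{k}$, a polynomial of degree $n-1$ in $k$. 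Since the spectrum is symmetric under $\lambda\mapsto-\lambda$ with equal multiplicities, I first record that $\eta_D(0)=0$ for every $n$, so that $\varphi_D=\tfrac\pi2\zeta_{D^2}(0)$ throughout.

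Next I would pin down $\zeta_{D^2}(0)$ from the local heat-kernel coefficient $a_{n/2}(D^2)$ (minus $\dim\ker D^2=0$, since $D$ is invertible on $S^n$ by Lichnerowicz together with $R=n(n-1)>0$). For odd $n$ the half-integer heat coefficient vanishes, giving $\zeta_{D^2}(0)=0$ and hence $\varphi_D=0$ exactly---this already yields the two odd-dimensional lines of the inequality table. For even $n$, $\zeta_{D^2}(0)$ is an explicit rational whose sign is controlled by the parity of $n/2$; checking that it is positive for $n\equiv0$ and negative for $n\equiv2\ (\mathrm{mod}\ 4)$ produces the sign of $\varphi_D$ in those cases. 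In the same way the strict inequalities $|\det(D,S^n)|\lessgtr 1$ reduce to the sign of $\zeta'_{D^2}(0)$, extracted from the formula below.

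The main work is the asymptotics of $\zeta'_{D^2}(0)$ (and of $\zeta_{D^2}(0)$ for even $n$). Writing $\zeta_{D^2}(s)=2\cdot 2^{\lfloor n/2\rfloor}\sum_{k\ge0}\binom{n-1+k}{k}(\tfrac n2+k)^{-2s}$ and expanding the degeneracy polynomial, I would reexpress $\zeta_{D^2}$ as a finite combination of Barnes (equivalently Hurwitz) zeta functions, using the explicit formulas of Branson, B\"ar--Schopka and Dowker. Differentiating at $s=0$ produces a finite sum of values $\zeta_R(-2j)$ and derivatives $\zeta_R'(-2j)$, weighted by generalized Bernoulli/Stirling coefficients. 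I would then apply the functional equation for $\zeta_R$ to trade these negative-argument values for $\zeta_R$ evaluated far to the right, where $\zeta_R\approx1$ and only elementary bounds $1\le\zeta_R(\sigma)\le\zeta_R(2)$ are needed. The growing spectral gap $n/2$, together with the generating identity $\sum_k\binom{n-1+k}{k}x^k=(1-x)^{-n}$, supplies a geometric decay; tracking it through the $\Gamma$ and $\sin(\pi s/2)$ factors of the functional equation yields the rate $O((\fracsm{3}{4})^n)$ and, via the sign of those trigonometric factors across the four residue classes, the remaining sign information.

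The heart of the matter---and the main obstacle---is controlling the combinatorial coefficients. The Bernoulli/Stirling weights and the $\Gamma$-factors from the functional equation both grow rapidly, so a usable bound requires a recursion in $n$ (mirroring the Pascal recursion of the binomial degeneracies) that organizes these coefficients and exhibits the uniform decay. As stressed in the introduction, for the Dirac operator the estimates survive even with absolute values inside the Bernoulli/Stirling sums---no delicate cancellation is needed, unlike for the bare Laplacian---so the task is to set up the recursion cleanly and establish sharp enough majorizations to isolate the base $3/4$ and to fix the mod-$4$ signs, rather than to control cancellation. Assembling the modulus and phase estimates then gives $\det(D,S^n)=1+O((\fracsm{3}{4})^n)$ and, in particular, the limit $1$.
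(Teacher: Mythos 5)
Your proposal is correct and follows essentially the same route as the paper: quote the explicit Branson/B\"ar--Schopka expansions of $\zeta_{D^2}$ in Riemann zeta values weighted by the indirectly defined polynomial coefficients, use $\eta_D(0)=0$ and the vanishing of $\zeta_{D^2}(0)$ in odd dimensions, convert negative-argument zeta values to the right of the critical strip via the functional equation, and control the coefficients by the recursion they inherit from their defining products --- exploiting that all terms share a common sign so the estimates go through with absolute values termwise. The paper's recursion yields the contraction factor $5/9$ per two dimensions (hence the $(\fracsm{3}{4})^n$ rate) and the mod-$4$ signs exactly as you indicate.
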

\begin{remark}
This pattern is also visible in the numerics in \cite{BS} and is interesting in comparison to the present author's results on local extremals of determinants, where that pattern is found to be (max, max, min, min), again depending on the respective dimensions mod $4$ (see \cite{Moeller}). This is yields another example of mod $4$ dependencies generically showing up for zeta regularised quantities.
\end{remark}
\begin{proof}
We need to show that, writing from now on $S^n$ for $S^n_{\mathrm{can}}$,
\begin{equation}
\begin{split}
&\lim_{n\to\infty}\zeta_{(D^2, S^n)}(0)=0,\\
&\lim_{n\to\infty}\zeta_{(D^2, S^n)}^{\,\prime}(0)=0.
\end{split}
\end{equation}

We firstly review the expressions for the zeta functions of Dirac squared, by Tom Branson \cite{BrFuncDet}. Depending on the parity of $n$ we have as follows, in perfect agreement with \cite{BS}.
\begin{equation}\label{DiracReview}
\begin{split}
&\zeta_{(D^2,S^{n})}(s)=\frac{2^{k+1}}{(2k-1)!}\sum_{\alpha=0}^{k-1}d_{\alpha,k}\zeta_R(2s-2\alpha-1),\quad\textrm{if}\quad n=2k,\\
&\zeta_{(D^2,S^{n})}(s)=\frac{2^{k}}{(2k-2)!}\sum_{\alpha=0}^{k-1}e_{\alpha,k}\big(2^{2s-2\alpha}-1\big)\zeta_R(2s-2\alpha),\quad\textrm{if}\quad n=2k-1,
\end{split}
\end{equation}
where the $d_{\alpha,k}$ and $e_{\alpha,k}$ are integers defined indirectly through the following polynomial expressions in $x$.
\begin{equation}\label{deDef}
\prod_{p=1}^{k-1}\big(x-p^2\big)=\sum_{\alpha=0}^{k-1}d_{\alpha,k}x^\alpha,\quad\textrm{and}\quad
\prod_{p=1}^{k-1}\big(x-(p-\fracsm{1}{2})^2\big)=\sum_{\alpha=0}^{k-1}e_{\alpha,k}x^\alpha.
\end{equation}
Also, to go from Branson's formulas, we applied the functional equation
\begin{equation}
\zeta_{1/2}(s)=\big(2^s-1\big)\zeta_R(s),
\end{equation}
since it turns out convenient to express everything in terms of the
Riemann zeta function $\zeta_R$.

We shall repeatedly rely on the controllability of $\zeta_R$ along the
positive real axis. In fact we shall apply only the following two quite elementary facts.
\begin{align}\label{ZetaEstimates}
&0\leq\zeta_R(n)\leq C_{R},\quad n\in\NM_+,\\ \label{ZetaPrimeEstimates}
&0\leq\zeta_R^{\,\prime}(n)\leq C_{R}',\quad n\in\NM_0,
\end{align}
for suitably chosen constants.

If now $n$ is odd, we have as always $\zeta_{(D^2, S^n)}(0)=0$, since the kernel is trivial. Thus we let $n$ be even and apply (\ref{DiracReview}). Recalling the functional equation for the Riemann zeta function, in the form most convenient here,
\begin{equation}\label{FuncEq}
\zeta_R(s)=2^s\pi^{s-1}\sin\Big(\frac{\pi s}{2}\Big)\Gamma(1-s)\zeta_R(1-s),
\end{equation}
we may now rewrite in a form where the asymptotics are more readily seen.
\[
\zeta_{(D^2,S^{2k})}(0)=\frac{2^{k+2}}{(2k-1)!}\sum_{\alpha=0}^{k-1}d_{\alpha,k}(-1)^{\alpha+1}(2\pi)^{-2\alpha-2}(2\alpha+1)!\zeta_R(2\alpha+2).
\]
Note that for fixed $k$ this is a sum in which all the terms have the same sign. Namely
\[
\sign\big[d_{\alpha,k}\big]=(-1)^{k-1-\alpha},
\]
while $\zeta_R$ is positive, since $\zeta_R(0)$ ($=-\frac{1}{2}$) is excluded. One thing we get from this is
\[
\sign\big[\zeta_{(D^2,S^{2k})}(0)\big]=(-1)^{k}.
\]
Very importantly the constant sign (for fixed $k$) allows us to estimate term by term and we write
\begin{equation}
\big\vert\zeta_{(D^2,S^{2k})}(0)\big\vert\leq C_RA(k),
\end{equation}
where the positive numbers $A(k)$ are defined as
\begin{equation}\label{ADef}
A(k):=\frac{2^{k+2}}{(2k-1)!}(-1)^{k+1}\sum_{\alpha=0}^{k-1}d_{\alpha,k}(-1)^\alpha(2\pi)^{-2\alpha-2}(2\alpha+1)!.
\end{equation}
The point is now that we can control the indirectly defined polynomial coefficients $d_{\alpha,k}$ through a simple recursion. From (\ref{deDef}) we get
\begin{equation}
d_{\alpha,k+1}=d_{\alpha-1,k}-k^2d_{\alpha,k}.
\end{equation}
Note that it is of course implicitly understood that $d_{\alpha,k}=0$, if $k$ is not in the range $\{0,1,\ldots, k-1\}$.
By a change of index for the first term and using $2\alpha(2\alpha+1)\leq 2k(2k+1)$, this shows the following estimate.
\begin{equation}
\begin{split}
A(k+1)&\leq\frac{2}{2k(2k+1)}\bigg\{k^2+\frac{2k(2k+1)}{(2\pi)^2}\bigg\}A(k)\\
&=\bigg\{\frac{k}{2k+1}+\frac{1}{2\pi^2}\bigg\}A(k)\\
&\leq\frac{5}{9}A(k).
\end{split}
\end{equation}
Thus for instance with $\delta^2=\frac{5}{9}<1$, we get
\[
\zeta_{(D^2,S^{n})}(0)=O\big(\delta^n\big),
\]
proving in particular the claim that
\begin{equation}\label{zetazero}
\lim_{n\to\infty}\zeta_{(D^2, S^n)}(0)=0,
\end{equation}
meaning that the phase converges to zero. Furthermore note that with $\delta^2=0.55\ldots$ this convergence is indeed rapid, as also witnessed by the numerics of B\"{a}r-Schopka.

To deal with the derivative we differentiate (\ref{DiracReview}) at $s=0$ and use $\zeta(-2\alpha)=0,\: \alpha\in\NM_{+}$.
\begin{align}
&\zeta_{(D^2,S^{2k})}^{\,\prime}(0)=\frac{2^{k+2}}{(2k-1)!}\sum_{\alpha=0}^{k-1}d_{\alpha,k}\zeta_R^{\,\prime}(-2\alpha-1),\\ \label{OddDeriv}
&\zeta_{(D^2,S^{2k-1})}^{\,\prime}(0)=\frac{2^{k+1}}{(2k-2)!}\bigg\{-e_{0,k}\log 2+\sum_{\alpha=0}^{k-1}e_{\alpha,k}\big(2^{-2\alpha}-1\big)\zeta_R^{\,\prime}(-2\alpha)\bigg\}.
\end{align}
Again we use the functional equation (\ref{FuncEq}) which by differentiation gives
\begin{equation}
\begin{split}\label{evenderiv}
&\zeta_R^{\,\prime}(-2\alpha-1)=2(-1)^\alpha(2\pi)^{-2\alpha-2}(2\alpha+1)!\zeta_R^{\,\prime}(2\alpha+2),\quad\alpha\in\NM_0,\\
&\quad\quad\quad\quad\quad\quad\quad+\big[\log(2\pi)+\gamma-H_{2\alpha+1}\big]\zeta_R(-2\alpha-1),
\end{split}
\end{equation}
\begin{equation}
\zeta_R^{\,\prime}(-2\alpha)=\pi(-1)^\alpha(2\pi)^{-2\alpha-1}(2\alpha)!\zeta_R(2\alpha+1),\quad\alpha\in\NM_+.
\end{equation}
Here $\gamma$ is Euler's constant, and we have applied
\begin{equation}\label{GammaDiff}
\frac{\Gamma^{\,\prime}(n)}{\Gamma(n)}=H_{n-1}-\gamma,\quad H_{n-1}=\sum_{j=1}^{n-1}\frac{1}{j},\quad n\in\NM.
\end{equation}

Dealing first with the even dimensional case $n=2k$, we note that the second term in (\ref{evenderiv}) gives a contribution that converges to zero. This follows since in absolute value, after summing over $\alpha$, it is altogether dominated by
\[
\Big\vert\big(\log(2\pi)+\gamma+H_{2k+1}\big)\zeta_{(D^2, S^{2k})}(0)\Big\vert=O\big(\delta^{2k}\big),
\]
where again $\delta^2=\frac{5}{9}<1$ for example is admissible, since $\{\frac{9}{10}(1+1/\pi^2)\}^k H_{2k+1}$ is bounded.

To control now the first term in (\ref{evenderiv}), note with $A(k)$ from (\ref{ADef}) we have similar estimates here
\[
\bigg\vert\frac{2^{k+2}}{(2k-1)!}\sum_{\alpha=0}^{k-1}d_{\alpha,k}\zeta_R^{\,\prime}(-2\alpha-1)\bigg\vert\leq C_R'A(k),
\]
thus we have the desired convergence
\[
\lim_{k\to\infty}\zeta_{(D^2, S^{2k})}^{\,\prime}(0)=0,
\]
and in fact $\zeta_{(D^2, S^{2k})}^{\,\prime}(0)=O\big(\delta^{2k}\big)$ with, for example, $\delta^2=\frac{5}{9}$ again.

In odd dimensions the situation is changed only slightly. The polynomial coefficients are now $e_{\alpha,k}$ from (\ref{deDef}) and as before we have
\begin{equation}
\sign[e_{\alpha,k}]=(-1)^{k-1-\alpha}.
\end{equation}
Writing now (\ref{OddDeriv}) as
\[
\zeta_{(D^2,S^{2k-1})}^{\,\prime}(0)=-\frac{2^{k+1}}{(2k-2)!}\sum_{\alpha=0}^{k-1}e_{\alpha,k}Z(-2\alpha),
\]
with
\begin{equation}
Z(-2\alpha)= \begin{cases}
\big(1-2^{-2\alpha}\big)\zeta_R^{\,\prime}(-2\alpha) & \alpha\in\NM_+\\
\log 2 & \alpha=0.
\end{cases}
\end{equation}
Note that once again the signs match up
\[
\sign[Z(-2\alpha)]=(-1)^{k-1-\alpha},
\]
and for a constant $\tilde{C}_R:=\max\big(\log 2,\fracsm{C_R}{2}\big)$ the estimates
\[
\big\vert Z(-2\alpha)\big\vert\leq C(2\pi)^{-2\alpha}(2\alpha)!
\]
hold, and we define a new sequence of positive numbers $B(k)$ by
\[
\big\vert\zeta_{(D^2,S^{2k-1})}^{\,\prime}(0)\big\vert\leq B(k):=\frac{\tilde{C}_R2^{k+1}}{(2k-2)!}(-1)^{k-1}\sum_{\alpha=0}^{k-1}e_{\alpha,k}(-1)^\alpha(2\pi)^{-2\alpha}(2\alpha)!.
\]
Again we find recursion relations for the relevant polynomials
\begin{equation}
e_{\alpha,k+1}=-(k-\fracsm{1}{2})^2e_{\alpha,k}+e_{\alpha-1,k},
\end{equation}
which gives recursive estimates on the $B(k)$
\[
B(k+1)\leq\bigg\{\frac{1}{2}+\frac{1}{2\pi^2}\bigg\}B(k)\leq\frac{5}{9}B(k),
\]
and this concludes the proof of the theorem.
\end{proof}

\section{Laplace operator sphere determinants}
\label{sec:Laplace}
Following \cite{Dow1} we investigate Laplacians of the form
\begin{equation}
L_{\alpha}=\Delta +\frac{n-1}{4n}R-\alpha_n^2,
\end{equation}
where $R$ is the scalar curvature (of $S^n$) and the $\alpha_n\in\RM$ are constants, but may depend on the dimension $n$. We restrict here to the case $0\leq \alpha_n\leq \frac{n-1}{2}$. Recalling now that
\[
R(S^n_{\mathrm{can}})=n(n-1),
\]
we see that $\alpha_n=\frac{n-1}{2}$ gives the ordinary Laplacian, while $\alpha_n=\frac{1}{2}$ corresponds to the Yamabe operator in each dimension. Note that there is no kernel of the operator $L$ in dimension $n$ if $\alpha_n<\frac{n-1}{2}$, and that for $\alpha_n=\frac{n-1}{2}$ the kernel is the constant functions on $S^n$.

The theorem we will prove in this section is the following.
\begin{theorem}\label{LaplaceTheorem}
On the standard (radius of unity) spheres we have the limits
\begin{equation}
\lim_{k\to\infty}\det(\Delta, S^{2k+1}_{\mathrm{can}})=0.
\end{equation}
and
\begin{equation}
\lim_{n\to\infty}\det(Y, S^n_{\mathrm{can}})=1
\end{equation}
In fact, we can display the asymptotics as (for $n=2k+1$)
\[
\zeta'_{\Delta, S_\mathrm{can}^{n}}(0)=\log n+O\Big(\frac{\log\log n}{\log n}\Big),
\]
and (for any $n$)
\[
\zeta'_{Y, S^{n}_{\mathrm{can}}}(0)=O\big(2^{-n}\big).
\]
\end{theorem}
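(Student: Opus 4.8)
The plan is to treat the two operators by genuinely different techniques, as forecast in the introduction. Both start from the spectral resolution of $L_\al$ on $S^n$: the eigenvalues of $\Delta+\frac{n-1}{4n}R$ are $\big(\ell+\frac{n-1}2\big)^2$, $\ell\in\NM_0$, with degeneracies that are polynomials in $\ell$ of degree $n-1$, so that $L_\al$ has eigenvalues $\lambda_\ell=\big(\ell+\frac{n-1}2-\al_n\big)\big(\ell+\frac{n-1}2+\al_n\big)$. Following Dowker \cite{Dow1}, the spectral zeta function $\zeta_{(L_\al,S^n)}(s)$ can be written as a finite combination of Barnes zeta functions $\zeta_B(\,\cdot\,,a\mid\mathbf{r})$ with weights $\mathbf{r}=(1,\dots,1)$ and shifts $a$ governed by $\frac{n-1}2\pm\al_n$; for odd $n=2k+1$ these shifts are integral, which is why the sharp asymptotics are stated in that case. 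The Yamabe operator $Y$ ($\al_n=\fracsm12$) will reduce, as in Section~\ref{sec:Dirac}, to sign-coherent sums of $\zeta_R$-values, whereas the ordinary Laplacian ($\al_n=\frac{n-1}2$) will not, and is where the contour machinery is needed.

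For $Y$ the shift sits at $a\in\{\frac n2-1,\frac n2\}$, growing with $n$, and I would expand $\zeta'_{(Y,S^n)}(0)$ exactly as in (\ref{OddDeriv}) into a sum $\sum_\alpha c_{\alpha,n}\,Z(-\alpha)$ whose polynomial coefficients $c_{\alpha,n}$ arise, as in (\ref{deDef}), from a product $\prod_p\big(x-(\cdots)^2\big)$ and therefore satisfy a two-term recursion in $n$. First I would record that recursion, read off from it the constant sign pattern of $c_{\alpha,n}$, and check via the functional equation (\ref{FuncEq}) that the factors $Z(-\alpha)$ carry the matching sign, so that, as in the Dirac case, the whole sum may be estimated term by term with absolute values. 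Converting the recursion into a contraction estimate of the type used for $A(k)$ in (\ref{ADef}) then yields a geometric ratio bounded strictly below~$1$; optimizing it (the growing shift supplies extra decay, this being the ``enough scalar curvature'' of the abstract) gives $\zeta'_{(Y,S^n)}(0)=O(2^{-n})$ and hence $\det(Y,S^n)\to1$. Only the elementary bounds (\ref{ZetaEstimates})--(\ref{ZetaPrimeEstimates}) on $\zeta_R$ and $\zeta_R'$ are needed, and the absence of a kernel for $\al_n<\frac{n-1}2$ ensures that $\zeta_R(0)=-\frac12$ never enters.

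For the ordinary Laplacian the $\al_n=\frac{n-1}2$ shift sends the effective arguments of $\zeta_R$ far to the left, where terms of opposite sign of enormous size cancel, so the sign-coherent estimate is unavailable. Instead I would represent each Barnes zeta function by its Hankel-contour integral
\[
\zeta_B(2s,a\mid\mathbf{r})=\frac{\Gamma(1-2s)}{2\pi i}\int_{\mathcal H}\frac{(-t)^{2s-1}\,e^{-at}}{\prod_j\big(1-e^{-r_j t}\big)}\,dt,
\]
differentiate at $s=0$ under the integral sign, and then deform the Hankel contour $\mathcal H$ off the real axis. Writing the $n$-dependence of the integrand as $e^{\,n\,\Phi(t)}$ for an explicit phase $\Phi$, the aim is to steer $\mathcal H$ through the governing saddle of $\Phi$ in a direction along which $|e^{\,n\Phi}|$ decays, so that the integral is controlled uniformly in $n$. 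Assembling the finitely many such contributions, I expect a single surviving logarithmic term---traceable to the lowest modes, i.e.\ to the removed zero eigenvalue and the behavior of $\Phi$ near its saddle---to give the leading $\log n$, with everything else of lower order, yielding $\zeta'_{(\Delta,S^{2k+1})}(0)=\log n+O\big(\frac{\log\log n}{\log n}\big)$ and therefore $\det(\Delta,S^{2k+1})\to0$.

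The crux, and the main obstacle, is exactly this saddle-point analysis. Because the shift $a$ and the weights combine so that $\Phi$ is \emph{neither real nor purely imaginary} along the natural contour, neither Laplace's method (which needs a real exponent with an interior maximum) nor stationary phase (which needs a purely imaginary exponent and lives off oscillatory cancellation) applies off the shelf; one must run a hybrid in which the real part of $\Phi$ supplies decay away from the saddle while the imaginary part still oscillates, and keep both under simultaneous control. The delicate point is to choose the deformation and prove \emph{uniform} bounds along it---on the tails, on the passage through the saddle, and on the subleading saddles---so that the remainder is genuinely $O(\log\log n/\log n)$ uniformly in $n$ rather than merely for each fixed $n$. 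This is precisely the mix of steepest-descent and stationary-phase estimates flagged in the introduction, and it is where essentially all the work of the section resides.
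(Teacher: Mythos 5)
Your plan for the Yamabe operator (a sign-coherent recursion on polynomial coefficients, as in the Dirac section) is a legitimate route: the paper itself remarks that the conformal Laplacian ``is easily carried out analogously to the Dirac case'' before opting for the contour method, so that half of the proposal is acceptable as a strategy even though the recursion is not carried out. The gaps are concentrated in the ordinary-Laplacian half. First, you have mislocated the source of the leading $\log n$. In Dowker's formula, $\zeta_{\Delta}'(0)$ is a sum of four Barnes derivatives $\zeta_n'(0,a)$ with $a\in\{0,1,n-1,n\}$, \emph{plus} an explicit term $\ln(n-1)$ present precisely because $\Delta$ has a kernel, \emph{minus} a finite correction sum $\sum_{j}\frac{\alpha_n^{2j}}{j}N_{2j}(n)\sum_{i=0}^{j-1}\frac{1}{2i+1}$. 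The whole point of the paper's analysis is that all four Barnes derivatives tend to zero (those at $a=n-1,n$ only like $O(\log\log n/\log n)$), so the $\log n$ is the elementary kernel term $\ln(n-1)$, not something that emerges from ``the behavior of $\Phi$ near its saddle''; if your contour analysis produced a $\log n$, it would be wrong. Second, the correction sum is missing from your decomposition; its identical vanishing for odd $n$ (a parity argument on the polynomials $\tilde N_r$) is the actual reason the Laplacian asymptotics are stated for $n=2k+1$ --- your explanation (``the shifts are integral for odd $n$'') fails, since $a\in\{0,1,n-1,n\}$ is integral for every $n$, and for even $n$ with $\alpha_n=\frac{n-1}{2}$ this term is not controlled.

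Third, the step you flag as ``where essentially all the work resides'' is not only unproven but misconceived as described. You propose to steer the Hankel contour through a saddle ``in a direction along which $|e^{n\Phi}|$ decays.'' The paper observes that for $a=n-1,n$ no such deformation exists: pointwise decay of the integrand would force $\mathrm{Re}\,z\le 0$ somewhere on the curve, i.e.\ a crossing of the logarithmic branch cut. The actual mechanism (the paper's Lemma 1) is the opposite: one stays on a contour where the modulus does \emph{not} decay, substitutes $x\mapsto\tan^{-1}(e^{-x})$ so the phase becomes exactly $\sin(nx)$, and extracts cancellation from the alternating signs over consecutive half-periods, using a monotonicity hypothesis on $e^{x}\varphi(x)$ to bound the alternating sum by its first term. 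This is not cosmetic: the paper's closing remark shows that with absolute values inside, the corresponding integrals grow at a rate between $\log\log n$ and $\log^2\log n$, so without a concrete cancellation lemma of this kind the terms $\zeta_n'(0,n-1)$ and $\zeta_n'(0,n)$ are not even bounded, and the claimed $O(\log\log n/\log n)$ error cannot be reached.
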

\begin{remark}
The proof of the claimed limit for the conformal Laplacian is easily
carried out analogously to the Dirac case. However we give a different type of proof, which works for the ordinary Laplacian as
well, and is more suitable for exposing the significance of the coupling constant and kernel of the operator.
\end{remark}

From \cite{Dow1} and \cite{Dow2} we get the corresponding zeta functions and their first derivatives at zero
\begin{equation}\label{DowkerFormula}
\begin{split}
\zeta_L'(0)=&\zeta_n'(0,a_--\alpha_n)+\zeta_n'(0,a_+-\alpha_n)+\zeta_n'(0,a_-+\alpha_n)+\zeta_n'(0,a_++\alpha_n)\\
&[+\ln(n-1)]\:-\sum_{j=1}^{\lfloor\frac{n}{2}\rfloor}\frac{\alpha_n^{2j}}{j}N_{2j}(n)\sum_{i=0}^{j-1}\frac{1}{2i+1}.
\end{split}
\end{equation}
In the notation here, the numbers $a_\pm:=(n\pm 1)/2$ originate in the contributions from Dirichlet ($a_+$) and Neumann ($a_-$) boundary conditions on hemispheres. For the ordinary Laplacian, the four functions $a_{\pm}\pm\alpha_n$ are, by order of appearance in the above equation, $(0,1,n-1,n)$. For the Yamabe operator it reads $(\fracsm{n}{2}-1,\fracsm{n}{2},\fracsm{n}{2},\fracsm{n}{2}+1)$.

The notation ``$[+\ln(n-1)]$'' means that this term is only present if $L_{\alpha}$ has non-trivial kernel. The numbers $N_{2j}(n)$ are defined using the following polynomials, namely let
\[
\prod_{p=1}^{n-2}\Big(x+\frac{n-1}{2}-p\Big)=\sum_{r=0}^{n-2}x^r\tilde{N}_r(n).
\]
Then
\[
N_{2j}(n):=\frac{2}{(n-1)!}\tilde{N}_{2j-2}(n).
\]
Finally the zeta functions appearing here are (special cases of) Barnes zeta functions, where for $a\in\RM$
\begin{equation}\label{Barnes}
\zeta_n(s,a)=\sum_{m=0}^{\infty}\binom{m+n-1}{n-1}\frac{1}{(a+m)^s},\quad\re{s}>n,
\end{equation}
Note that here, if $a=0$, it is implicit that the term $m=0$ is omitted.

The meromorphic continuation is carried out using the contour integrals
\begin{equation}\label{BarnesIntegral}
\zeta_n(s,a)=\frac{i\Gamma(1-s)}{2\pi}\int_H\frac{e^{az}}{(1-e^{z})^n}z^{s-1}dz,
\end{equation}
where $H$ is a left Hankel contour, and $z^{s-1}$ is defined using the
negative real axis branch cut of the logarithm enclosed by $H$. The
contours will be chosen conveniently for getting estimates of the
Barnes functions. Deformations of the curves must of course respect the branch cut and the possible poles at $2\pi i k,k\in\ZM$.

We note a few general features of the Barnes zeta
functions. Each contour will be taken as the positively oriented
boundary of some box
$(-\infty,r_n]\times[-\fracsm{i\pi}{2},\fracsm{i\pi}{2}]$, where $r_n$
is a real-valued function of the dimension $n$. We denote such a contour by $\gamma_n$ and decompose in the obvious way into one vertical and two horizontal components $h_n^+\cup v_n\cup h_n^-$. Now, since
\begin{equation}\label{NormofDenom}
\big\vert 1-e^{z}\big\vert^n=\big(1+e^{2x}-2e^{x}\cos(y)\big)^{\frac{n}{2}}
\end{equation}
is the norm of the denominator at $z=x+iy$, we always have
\begin{equation}\label{NormOnPiHalves}
\big\vert 1-e^{z}\big\vert^n=\big(1+e^{2x}\big)^{\frac{n}{2}}
\end{equation}
on the horizontal parts of the contours. From (\ref{BarnesIntegral})
we see, given that $a\leq n$,
\begin{equation}\label{HorizontalParts}
n^k\Bigg\vert\frac{e^{az}}{(1-e^z)^n}\Bigg\vert\leq n^k\big(1+e^{-2x}\big)^{-\fracsm{n}{2}}\to 0,\quad\text{for}\quad n\to\infty,
\end{equation}
for fixed $z$ and $k$, i.e. pointwise convergence to zero of the
integrands. Lebesgue dominated convergence theorem can therefore in most case be used to prove that the contribution from a horizontal piece is $O(n^{-\infty})$ as $n\to\infty$, meaning by definition $O(n^{-k})$ as $n\to\infty$, for any $k$.

When $a>0$, the derivatives at $s=0$ are, using (\ref{GammaDiff})
\begin{equation}\label{Derivatives}
\zeta_n'(0,a_n)=\frac{i}{2\pi}\int_{\gamma_n}\frac{e^{a_nz}}{(1-e^{z})^n}\frac{\log z+\gamma}{z}dz.
\end{equation}

\begin{proposition}
If $a_n=0,1,\fracsm{n-1}{2}$ or $\fracsm{n}{2}$, then
\[
\zeta_n'(0,a_n)\to 0,\quad\textrm{as}\quad n\to\infty,
\]
in fact it is $O(n^{-\infty})$ as $n\to\infty$.
\end{proposition}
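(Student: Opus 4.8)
The plan is to estimate the contour integral representation \eqref{Derivatives} directly, exploiting the contour decomposition $\gamma_n = h_n^+\cup v_n\cup h_n^-$ already set up in the text. For the four listed values of $a_n$, the key observation is that each satisfies $0\le a_n\le n$, so the decay estimate \eqref{HorizontalParts} applies. First I would dispose of the horizontal pieces $h_n^\pm$: on these the integrand is bounded by $\bigl(1+e^{-2x}\bigr)^{-n/2}$ times a slowly growing factor coming from $(\log z+\gamma)/z$, and since this tends to zero pointwise and is dominated uniformly, the dominated convergence argument flagged after \eqref{HorizontalParts} gives an $O(n^{-\infty})$ contribution. The only subtlety is to confirm the dominating function is integrable near the corners $z = r_n \pm i\pi/2$ and that the $\log z$ factor does not spoil integrability, which is routine since $(\log z)/z$ is integrable at infinity along the horizontals.

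The heart of the matter is the vertical segment $v_n$, parametrised by $z = r_n + iy$ with $y\in[-\pi/2,\pi/2]$. Here I would choose the box parameter $r_n$ to my advantage. On $v_n$ the denominator norm is $\bigl(1+e^{2r_n}-2e^{r_n}\cos y\bigr)^{n/2}$ by \eqref{NormofDenom}, and the numerator contributes $e^{a_n r_n}$ in modulus. The cases split naturally: for $a_n = 0$ or $1$ the numerator is essentially harmless, and pushing $r_n\to-\infty$ makes $\bigl|1-e^z\bigr|^{-n}\to 1$ while the integration range stays bounded, so one must instead balance the growth of $(\log z)/z$ against the geometry; for $a_n = \tfrac{n-1}{2}$ or $\tfrac{n}{2}$ the competition is between $e^{a_n r_n}$ and the denominator, and the natural choice is $r_n$ near $0$ or slightly negative so that $e^{2r_n}$ is comparable to $1$ and the factor $\bigl(1+e^{2r_n}-2e^{r_n}\cos y\bigr)^{n/2}$ grows geometrically in $n$ for $y$ bounded away from $0$. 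Since $a_n\le n/2$ in all four cases, the denominator's growth dominates the numerator's, yielding exponential decay in $n$ after accounting for the length $\pi$ of the segment and the polynomial prefactor $\log r_n / r_n$.

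The main obstacle I expect is the point $y = 0$ on the vertical segment, where $\cos y = 1$ and the denominator norm degenerates to $\bigl(1+e^{2r_n}-2e^{r_n}\bigr)^{n/2} = |1-e^{r_n}|^n$; if $r_n$ is chosen close to $0$ this quantity is small and the geometric decay is lost precisely at that point. The resolution is to treat a shrinking neighborhood of $y=0$ separately: near $y=0$ one has $1+e^{2r_n}-2e^{r_n}\cos y \approx (1-e^{r_n})^2 + e^{r_n}y^2$, so the integrand behaves like a Gaussian in $y$ with width controlled by $r_n$, and one estimates this local contribution by a Laplace-type bound, while away from $y=0$ the cruder geometric estimate suffices. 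Choosing $r_n$ to balance these two regimes — making the local Gaussian mass and the tail both $O(n^{-\infty})$ — is the delicate optimization. Because the values $a_n=\tfrac{n-1}{2},\tfrac{n}{2}$ sit at or below the symmetric point $n/2$, no genuine saddle crosses the contour and the balance can be achieved with $r_n\to 0^-$ at a suitable rate.

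Finally, once each segment's contribution is shown to be $O(n^{-\infty})$, I would assemble them via the triangle inequality applied to \eqref{Derivatives}, concluding that $\zeta_n'(0,a_n)=O(n^{-\infty})$ and hence tends to zero, as claimed. I expect the four cases to share the same structural argument with only the choice of $r_n$ and the bookkeeping of the numerator exponent differing; in particular the cases $a_n=0,1$ may be handled together and $a_n=\tfrac{n-1}{2},\tfrac{n}{2}$ together, since within each pair the numerator differs only by a bounded factor $e^{r_n}$.
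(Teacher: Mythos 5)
Your overall skeleton (split $\gamma_n$ into horizontals plus a vertical segment, kill the horizontals by the dominated-convergence remark after (\ref{HorizontalParts}), estimate the vertical piece directly) matches the paper, but the core of your argument for $a_n=\fracsm{n-1}{2},\fracsm{n}{2}$ --- taking $r_n$ ``near $0$ or slightly negative'' and doing a Laplace-type analysis near $y=0$ --- does not work and is also unnecessary. First, $r_n<0$ is not permitted at all: the integrand in (\ref{Derivatives}) has a pole of order $n+1$ at $z=0$ (the factor $(1-e^z)^{-n}$ alone contributes order $n$), so shrinking the box past the origin changes the value of the integral by a nonzero residue; the contour must keep $r_n>0$. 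Second, with $r_n\to 0^+$ your claimed geometric decay away from $y=0$ is false: on $v_n$ the denominator norm is $\big(1+e^{2r_n}-2e^{r_n}\cos y\big)^{n/2}\approx\big(2-2\cos y\big)^{n/2}$, which is \emph{less} than $1$ for the entire range $0<|y|<\pi/3$, so the integrand grows geometrically there rather than decaying; only oscillation could rescue such an estimate, and that is precisely the hard stationary-phase analysis the paper reserves for $a_n=n-1,n$ (Lemma \ref{stationaryphaselemma}), not something your crude modulus bounds provide. The paper's actual choice is the opposite and much simpler: a \emph{fixed} contour at $r_n=\log 4$, where $|1-e^z|\geq 3$ uniformly on the vertical segment while $e^{a_n\re z}\leq 4^{n/2}=2^n$ since $a_n\leq n/2$, giving a clean $(2/3)^n$ bound with no degeneracy at $y=0$ to worry about (equivalently, for $a_n=n/2$ one writes the integrand as $(e^{z/2}-e^{-z/2})^{-n}$ and uses $|e^{z/2}-e^{-z/2}|^2=e^x+e^{-x}-2\cos y$).

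You also gloss over the case $a_n=0$, which cannot be handled by (\ref{Derivatives}) at all: the representation (\ref{BarnesIntegral}) does not continue directly to $s=0$ when $a=0$ (the $m=0$ term is singular). The paper needs the resummation identity $\zeta_n(s,0)=\zeta_n(s,1)+(n-1)\zeta_n(s+1,1)$ of (\ref{BarnesFuncEq}), which produces the modified integrand in (\ref{DerivaZero}) with an explicit factor $(n-1)$ in front; that factor is then beaten by the $3^{-n}$ from the vertical segment at $\re z=\log 4$. Lumping $a_n=0$ with $a_n=1$ as ``essentially harmless'' skips a real step.
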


\begin{proof}
For dealing with these cases we use the fixed contour consisting of
the positively oriented boundary of the box $(-\infty,\log
4]\times[-\fracsm{i\pi}{2},\fracsm{i\pi}{2}]$. From the general
remarks following (\ref{HorizontalParts}), the horizontal parts of the
contours all contribute $O(n^{-\infty})$ for $a_n>0$, by Lebesgue dominated convergence.

In the case $a_n=0$ however, note that the continuation of the
integral doesn't work directly around $s=0$. From (\ref{Barnes})
we deduce, by straightforward resummation, a family of functional equations
\begin{equation}\label{BarnesFuncEq}
\zeta_n(s,0)=\zeta_n(s,1)+(n-1)\zeta_n(s+1,1),
\end{equation}
thus allowing again for meromorphic continuation to $\CM$ by integral representations. Taking into account the front factor of $\Gamma(1-s)$ in (\ref{BarnesIntegral}) and using
\[
\Gamma(1-s)=\frac{1}{1-s}-\gamma+\ldots
\]
near $s=1$, shows
\begin{equation}\label{DerivaZero}
\zeta'_n(s,0)=\zeta'_n(0,1)+\frac{n-1}{2\pi i}\int_H\frac{e^z}{(1-e^z)^n}\big(\gamma\log z+\fracsm{1}{2}\log^2z\big)dz.
\end{equation}
On the vertical line segments $v_n$ we use $\vert1-e^z\vert\geq3$ and estimate
\[
\begin{split}
&(n-1)\bigg\vert\int_{v_n}\frac{e^z}{(1-e^z)^n}P(z)\;dz\bigg\vert\leq C_P(n-1)\:3^{-n}\to 0,\quad\mathrm{as}\quad n\to\infty,
\end{split}
\]
where $P$ is any polynomial in $\frac{1}{z}$ and $\log z$, as in the expressions appearing in (\ref{Derivatives}) and (\ref{DerivaZero}).

In the remaining cases $a_n=\fracsm{n-1}{2}, \fracsm{n}{2}$ we focus for the sake of argument on the latter, i.e.
\[
\int_H\frac{1}{\big(e^{z/2}-e^{-z/2}\big)^n}\frac{\log z+\gamma}{z}dz,
\]
Then using $\big\vert e^{z/2}-e^{-z/2}\big\vert^2=e^x+e^{-x}-2\cos(y)$
we see that indeed, by Lebesgue dominated convergence, there is rapid convergence to zero, namely
\[
\zeta_{(Y,S^n)}'(0)= O\big(2^{-n}\big),
\]
concluding the proof.
\end{proof}
The next proposition deals with the most complicated of the
terms, with parameters $a_n=n-1$ and $n$. Here there is no way to deform the Hankel contour in the complex plane, respecting the poles and branch cut of $\log z$, so as to obtain pointwise convergence to zero of the integrand as $n\to 0$ everywhere along the curve. This follows from (\ref{NormofDenom}), since it would imply $x\leq 0$ at some point of the curve, which would thus intersect the logarithmic branch cut.

The following lemma essentially gives a suitable hybrid of Laplace's method and the method of
stationary phase, applicable to the cases needed here.
\begin{lemma}[Laplace's method/stationary phase hybrid]\label{stationaryphaselemma}
Assume $\varphi:[a,b)\to\RM_{\geq 0}$ for $0\leq a< b<\infty$ satisfies
\begin{itemize}
\item[(i)]{} $\varphi$ is continuous and bounded,
\item[(ii)]{} $x\mapsto e^x\cdot\varphi(x)$ is increasing on $[a,b)$.
Then, writing $s_b:=\tan^{-1}(e^{-b})$,
\[
\Bigg\vert\int_a^b\big(ie^{-x}-1\big)^{-n}\varphi(x)dx\Bigg\vert\leq \frac{2\pi}{n\tan s_b}\max_{s_b\leq x\leq s_b+\fracsm{\pi}{n}}\varphi\big(\log\big(\fracsm{1}{\tan x}\big)\big)+O(n^{-\infty}),
\]
where the error term is uniform in $b$, but not generally in $a$.
\end{itemize}
\end{lemma}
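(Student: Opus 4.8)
The plan is to strip the Laplace/stationary-phase content out of the unfamiliar factor $(ie^{-x}-1)^{-n}$ by a trigonometric substitution, and then to extract the gain of order $1/n$ \emph{purely} from the oscillation of a single phase, using hypothesis (ii) to make the surviving amplitude monotone so that a cancellation argument applies.

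First I would substitute $s=\arctan(e^{-x})$, so that $s$ runs over $[s_b,s_a]$ with $s_a:=\arctan(e^{-a})$, and $\cot s=e^{x}$, $\log(1/\tan s)=x$; writing $\psi(s):=\varphi(\log(1/\tan s))$ one has $\psi(s)=\varphi(x)$. Since $ie^{-x}-1=-1+ie^{-x}$ has modulus $(1+e^{-2x})^{1/2}=1/\cos s$ and argument $\pi-s$, one gets $(ie^{-x}-1)^{-n}=e^{-in\pi}\cos^{n}s\,e^{ins}$, while $dx=-(\sec^{2}s/\tan s)\,ds$. Collecting $\cos^{n}s\cdot\sec^{2}s=\cos^{n-2}s$ and $\psi(s)/\tan s=\cot s\,\psi(s)$, the integral becomes, up to the unimodular constant $e^{-in\pi}$,
\[
\int_{s_b}^{s_a}\cos^{n-2}s\,e^{ins}\,\Phi(s)\,ds,\qquad \Phi(s):=\cot s\,\psi(s).
\]
The crucial observation is that $\Phi(s)=\cot s\,\psi(s)=e^{x}\varphi(x)$ is exactly the quantity assumed increasing in $x$ in (ii); as $x$ decreases when $s$ increases, $\Phi$ is \emph{decreasing} in $s$, and since $\cos^{n-2}s$ is positive and decreasing on $(0,\pi/2)$, the full amplitude $A(s):=\cos^{n-2}s\,\Phi(s)$ is positive and decreasing.

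The main step is a half-period cancellation. Partition $[s_b,s_a]$ into the intervals $I_j=[s_b+(j-1)\tfrac{\pi}{n},\,s_b+j\tfrac{\pi}{n}]$, extending $A$ by $0$ past $s_a$ (which keeps it nonincreasing), and set $s=s_b+(j-1)\tfrac{\pi}{n}+t$ with $t\in[0,\tfrac{\pi}{n}]$. Since $ns=ns_b+(j-1)\pi+nt$, one has $(-1)^{j-1}e^{ins}=e^{ins_b}e^{int}$ uniformly in $j$, so the integral equals $e^{ins_b}\int_0^{\pi/n}\big(\sum_{j\ge 1}(-1)^{j-1}A(s_b+(j-1)\tfrac{\pi}{n}+t)\big)e^{int}\,dt$. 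Because $A$ is decreasing, the inner alternating sum of nonnegative decreasing terms lies between $0$ and its first term $A(s_b+t)$, so its modulus is at most $A(s_b+t)$. This yields
\[
\Big\vert\int_{s_b}^{s_a}\cos^{n-2}s\,e^{ins}\,\Phi(s)\,ds\Big\vert\le\int_{s_b}^{s_b+\pi/n}A(s)\,ds\le\frac{\pi}{n}\max_{s_b\le s\le s_b+\pi/n}\Phi(s),
\]
using $\cos^{n-2}s\le 1$. Finally $\cot s\le\cot s_b=1/\tan s_b$ on the interval gives $\max\Phi\le(1/\tan s_b)\max_{[s_b,s_b+\pi/n]}\psi$, which is the asserted bound (indeed with the better constant $\pi$ in place of $2\pi$, and with no error term).

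The hard part is recognizing that the oscillation alone must carry the decay: when $s_b\to0$ (that is $b\to\infty$) the Laplace factor $\cos^{n-2}s$ stays near $1$ across many periods, so it provides no localization, and only the half-period sign alternation, licensed by the monotonicity of $A$ that is the content of hypothesis (ii), produces the factor $1/n$. This is exactly the announced hybrid of Laplace's method and stationary phase. With the grouping above the estimate comes out uniformly in both endpoints and with no error, so the stated $O(n^{-\infty})$ is slack; its natural origin, in an integration-by-parts variant of the same computation, is the boundary contribution at the far endpoint, of size $\tfrac{1}{n}\cos^{n-2}s_a\,\Phi(s_a)$, which is $O(n^{-\infty})$ at a rate set by $\cos s_a=\cos\arctan(e^{-a})$, hence uniform in $b$ but not in $a$. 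The trigonometric identities for the modulus and argument of $ie^{-x}-1$ and the one-sided limit of $\Phi$ at $s_b$ are the only routine points.
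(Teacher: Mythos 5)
Your proof is correct and follows essentially the same route as the paper's: the substitution $s=\tan^{-1}(e^{-x})$, the observation that hypothesis (ii) makes the amplitude $\cot s\cdot\varphi(\log(1/\tan s))=e^{x}\varphi(x)$ (times the decreasing factor $\cos^{n-2}s$) monotone in $s$, and the half-period alternating-sum cancellation starting at $s_b$. The only differences are refinements: you keep the complex exponential $e^{ins}$ rather than treating the sine and cosine contributions separately (which is why you obtain the sharper constant $\pi$ in place of $2\pi$), and by extending the amplitude by zero past $s_a$ you absorb the paper's explicit $O(n^{-\infty})$ tail term from the incomplete last half-period into the main estimate.
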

\begin{remark}
\begin{itemize}
\item[(i)] Note that in applying this lemma, we will let $b$ depend on $n$, while $a$ will be fixed.
\item[(ii)] If $\varphi$  is $C^1$, the second condition in the lemma is equivalent to $\varphi'\geq-\varphi$.
\end{itemize}
\end{remark}

\begin{proposition}\label{nProposition}
If $a_n=n-1$ or $n$, then
\[
\zeta_n'(0,a_n)\to 0,\quad\text{for}\quad n\to\infty.
\]
\end{proposition}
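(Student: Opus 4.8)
The plan is to feed the two remaining parameters into Lemma~\ref{stationaryphaselemma} after rewriting (\ref{Derivatives}) so that the oscillatory denominator $(ie^{-x}-1)^{-n}$ appears explicitly. The algebraic key is the identity $\frac{e^{nz}}{(1-e^z)^n}=(e^{-z}-1)^{-n}$, and more generally $\frac{e^{(n-1)z}}{(1-e^z)^n}=e^{-z}(e^{-z}-1)^{-n}$, so that for $a_n=n$
\[
\zeta_n'(0,n)=\frac{i}{2\pi}\int_{\gamma_n}(e^{-z}-1)^{-n}\,\frac{\log z+\gamma}{z}\,dz ,
\]
with an extra benign factor $e^{-z}$ when $a_n=n-1$. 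I would take the box contour $\gamma_n$ with vertical edge at $r_n=\log n$, and split it into three pieces: the tails $\re z\le a$ for a fixed constant $a$, the vertical segment $v_n$ at $\re z=r_n$, and the near-corner horizontal pieces $a\le\re z\le r_n$. The balance $r_n=\log n$ is what makes all three tend to zero, since it keeps $ne^{-r_n}=1$ bounded while $\frac{\log r_n}{r_n}\to 0$.

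Two of the pieces are controlled by absolute values alone. On $v_n$, writing $z=r_n+iy$ and using (\ref{NormofDenom}), one has $|e^{-z}-1|^{-n}\approx e^{ne^{-r_n}\cos y}\le e$ because $ne^{-r_n}=1$; as $\big|\frac{\log z+\gamma}{z}\big|=O\big(\frac{\log\log n}{\log n}\big)$ along $v_n$ and the segment has length $\pi$, this piece is $O\big(\frac{\log\log n}{\log n}\big)$. On the tails the factor $(1+e^{-2\xi})^{-n/2}$ decays super-polynomially in $n$ while the amplitude is integrable, so these contribute $O(n^{-\infty})$ exactly as in the remarks following (\ref{HorizontalParts}). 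Neither piece requires cancellation.

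The heart of the matter is the near-corner horizontal piece, where no pointwise decay is available: a direct absolute-value estimate of $\int_a^{r_n}(1+e^{-2\xi})^{-n/2}\frac{\log\xi}{\xi}\,d\xi$ diverges like $\log\log n$, so genuine cancellation is essential. Here the geometry is exactly right: on the lower edge $z=\xi-i\frac{\pi}{2}$ one computes $e^{-z}=ie^{-\xi}$, hence the integrand is precisely $(ie^{-\xi}-1)^{-n}\varphi(\xi)$ with $\varphi(\xi)=\frac{\log(\xi-i\pi/2)+\gamma}{\xi-i\pi/2}$, while the upper edge is its complex conjugate up to orientation, so it suffices to treat the lower edge. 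This is the form of Lemma~\ref{stationaryphaselemma}. I would apply the lemma to $\re\varphi$ and $\im\varphi$ separately, choosing the fixed endpoint $a$ large enough that both are nonnegative and that $e^{\xi}\re\varphi$ and $e^{\xi}\im\varphi$ are increasing (hypothesis (ii), i.e.\ $\varphi'\ge-\varphi$), which holds for large $\xi$ since $\varphi\sim\frac{\log\xi}{\xi}$ decays only polynomially. With $b=r_n=\log n$ we have $\tan s_b=e^{-b}=1/n$, so $\frac{2\pi}{n\tan s_b}=2\pi$ and the lemma bounds this piece by $2\pi\max\varphi+O(n^{-\infty})=O\big(\frac{\log\log n}{\log n}\big)$.

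Collecting the three pieces gives $\zeta_n'(0,n)=O\big(\frac{\log\log n}{\log n}\big)\to 0$. The case $a_n=n-1$ is in fact easier: the extra factor $e^{-z}$ contributes $e^{-\xi}\le n^{-1/2}$ throughout the range $\xi\ge\frac12\log n$ where the denominator is of size unity, so absolute values already yield $O(n^{-1/2}\log\log n)$ with no appeal to the oscillation. I expect the main obstacle to be the honest verification of hypothesis (ii) for the complex amplitude $\varphi$ — controlling $\re\varphi$, $\im\varphi$ and their logarithmic derivatives uniformly on $[a,\infty)$ — together with confirming that the error term of Lemma~\ref{stationaryphaselemma}, which is only asserted uniform in $b$, remains $O(n^{-\infty})$ once $b=r_n$ is allowed to grow with $n$.
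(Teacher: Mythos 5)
Your proposal is correct and follows essentially the same route as the paper: the same box contour with right edge at $\log n$, the same three-way split with the tails $O(n^{-\infty})$ and the vertical edge $O(\log\log n/\log n)$ by absolute values, and the same application of Lemma~\ref{stationaryphaselemma} with $s_b=\tan^{-1}(1/n)$ to the amplitude on the horizontal edges for $a_n=n$. The only (immaterial) divergence is the case $a_n=n-1$, where the paper integrates by parts to get $O(n^{-1})$ while you observe that the extra factor $e^{-\xi}$ already makes absolute values suffice; both work.
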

\begin{proof}
For proving this, we assume $n\geq 4$ and shift for each $n$ the contour to the positively oriented boundary of the box $(-\infty,\log n]\times[-\fracsm{i\pi}{2},\fracsm{i\pi}{2}]$, which still encloses only the singularity at $z=0$. 

The contributions from $h^{\pm}_n$ with, say, $\re z\leq \log 4$, are
again $O(n^{-\infty})$ by (\ref{HorizontalParts}), while on the moving right hand edges $v_n$, we have the estimates
\[
\big\vert e^{-z}-1\big\vert^{-n}=(1+\fracsm{1}{n^2}-\fracsm{2}{n}\cos(y))^{-\fracsm{n}{2}}\leq\big(1-\fracsm{1}{n}\big)^{-n},
\]
yielding, for some constant $C>0$
\[
\bigg\vert\int_{v_n}\frac{1}{(e^{-z}-1)^n}\frac{\log z+\gamma}{z}dz\bigg\vert\leq C\frac{\log\log n}{\log n}.
\]
Thus the contribution from this part tends to zero as $n\to\infty$, and similarly for $a_n=n-1$, again with or without the $\log z$ present, as needed for the terms in (\ref{Derivatives}).

For the contributions from the right half-plane part of the horizontals, denoted by $h_{R,n}^{\pm}$, in the cases $a_n=n-1,n$, any finite piece gives an $O(n^{-\infty})$ contribution as $n\to\infty$.

To deal with the case $a_n=n$, we calculate explicitly,
\[
\begin{split}
&\frac{i}{2\pi}\int_{h_{R,n}^{+}\cup h_{R,n}^{-}}\big(e^{-z}-1\big)^{-n}\frac{\log z+\gamma}{z}dz=\\
&-\frac{1}{\pi}\:\im\int_{0}^{\log n}\big(ie^{-x}-1\big)^{-n}\frac{\big(x-i\pi/2\big)\Big(\log\sqrt{x^2+(\frac{\pi}{2})^2}+i\tan^{-1}(\fracsm{\pi}{2x})+\gamma\Big)}{x^2+(\fracsm{\pi}{2})^2}dx.
\end{split}
\]
To take the oscillation into account, we repeatedly apply Lemma \ref{stationaryphaselemma} to the expression, with $s_b=\fracsm{1}{n}$, considering the product terms separately. For each term the relevant positive function $\varphi$ is decreasing for $x$ large (i.e. for $n$ large, since finite pieces contribute $O(n^{-\infty})$), so the maximum is evaluation at $s_b+\frac{\pi}{n}$. For example one term is analyzed as follows
\[
\varphi_1(x):=\frac{x\log\sqrt{x^2+(\frac{\pi}{2})^2}}{x^2+(\frac{\pi}{2})^2},\quad x\in[1,\log n),
\]
\[
\begin{split}
&\Bigg\vert\int_{0}^{\log n}\big(ie^{-x}-1\big)^{-n}\varphi_1(x)dx\Bigg\vert\leq2\pi\frac{\fracsm{1}{n}}{\tan\fracsm{1}{n}}\varphi_1\Big(\log\big(\fracsm{1}{\tan(\frac{1+\pi}{n})}\big)\Big)+O(n^{-\infty}),
\end{split}
\]
so that by Lemma \ref{stationaryphaselemma}, the righthand side converges to zero, as the remaining terms can
similarly be shown to do.

In the case $a_n=n-1$, Lemma \ref{stationaryphaselemma} is not needed, since after a partial integration it is easily seen that
\[
\int_{\fracsm{i\pi}{2}+[0,\log n]}\big(e^{-z}-1\big)^{-n}e^{-z}\frac{\log
  z+\gamma}{z}dz=O(n^{-1})\quad\text{as}\quad n\to\infty,
\]
which ends the proof of Proposition \ref{nProposition}.
\end{proof}
\begin{remark}
Note that the use of Lemma \ref{stationaryphaselemma} in the proof was essential, since without the oscillating factor, amounting approximately to $\sin(ne^{-x})$, we would have
\[
\begin{split}
\frac{1}{2}\big(\log(\log n)\big)^2\geq \int_{\log\sqrt{n}}^{\log n}\big(1+e^{-2x}\big)^{-\frac{n+1}{2}}\frac{\log x}{x}dx\geq\frac{\log 2}{4}\Big\{\log(\log n)-\frac{1}{2}\log 2\Big\},
\end{split}
\]
and thus convergence to $+\infty$, at a rate between $\log\log n$ and $\log^2\log n$.
\end{remark}
We need to deal with the last term in (\ref{DowkerFormula}), which is controlled as follows.
\begin{proposition}
If $n$ is odd, then
\[
\sum_{j=1}^{\lfloor\frac{n}{2}\rfloor}\frac{\alpha_n^{2j}}{j}N_{2j}(n)\sum_{i=0}^{j-1}\frac{1}{2i+1}=0.
\]
Setting $\alpha_n=\fracsm{1}{2}$, then for $n$ of any parity
\[
\lim_{n\to\infty}\sum_{j=1}^{\lfloor\frac{n}{2}\rfloor}\frac{\alpha_n^{2j}}{j}N_{2j}(n)\sum_{i=0}^{j-1}\frac{1}{2i+1}=0,
\]
being $O(2^{-n})$ as $n\to\infty$.
\end{proposition}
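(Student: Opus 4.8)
The plan is to treat the two assertions separately, reducing both to the root structure of the polynomial $\prod_{p=1}^{n-2}(x+\fracsm{n-1}{2}-p)$ that defines the $\tilde N_r(n)$, and then, for the quantitative estimate, to the coefficients $e_{\alpha,k}$ already analysed in Section~\ref{sec:Dirac}.

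First I would dispose of the odd case by a symmetry argument. When $n$ is odd, $m:=\fracsm{n-1}{2}$ is an integer and the roots $p-m$, $p=1,\dots,n-2$, form the symmetric block of consecutive integers $\{-\fracsm{n-3}{2},\dots,\fracsm{n-3}{2}\}$; the involution $p\mapsto(n-1)-p$ sends $p-m$ to $-(p-m)$ and fixes only $p=m$, contributing the root $0$. Hence the defining polynomial is an odd function of $x$, so every even-index coefficient vanishes, $\tilde N_{2j-2}(n)=0$, and therefore $N_{2j}(n)=0$ for all $j$. This kills the whole sum for \emph{every} value of $\al_n$, giving the first claim (the case $n=1$ being the empty sum).

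For the second claim with $\al_n=\fracsm{1}{2}$ the odd case is then subsumed (the sum is $0=O(2^{-n})$), so I only need $n=2k$ even. Here $\fracsm{n-1}{2}=k-\fracsm12$, the roots are the half-integers $\pm(\ell-\fracsm12)$, $\ell=1,\dots,k-1$, and I would write $\prod_{p=1}^{n-2}(x+k-\fracsm12-p)=\prod_{\ell=1}^{k-1}\big(x^2-(\ell-\fracsm12)^2\big)$. Substituting the variable $x^2$ and comparing with (\ref{deDef}) identifies $\tilde N_{2\alpha}(n)=e_{\alpha,k}$, hence $N_{2j}(n)=\fracsm{2}{(n-1)!}\,e_{j-1,k}$ with $k=n/2$. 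The sum becomes $\fracsm{2}{(2k-1)!}\sum_{j=1}^{k}\fracsm{e_{j-1,k}}{j\,4^j}\sum_{i=0}^{j-1}\fracsm{1}{2i+1}$, whose terms alternate in sign because $\sign[e_{\alpha,k}]=(-1)^{k-1-\alpha}$. The estimate then runs exactly parallel to the Dirac argument: taking absolute values and using the crude bound $\sum_{i=0}^{j-1}\fracsm{1}{2i+1}\le j$ to absorb the $1/j$, the modulus of the sum is at most $\fracsm12 U(k)$, where $U(k):=\fracsm{1}{(2k-1)!}\sum_{\alpha=0}^{k-1}|e_{\alpha,k}|\,4^{-\alpha}$. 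Since the sign rule makes the two terms of $e_{\alpha,k+1}=-(k-\fracsm12)^2e_{\alpha,k}+e_{\alpha-1,k}$ add in absolute value, one has $|e_{\alpha,k+1}|=(k-\fracsm12)^2|e_{\alpha,k}|+|e_{\alpha-1,k}|$, and the geometric weight $4^{-\alpha}$ closes the recursion cleanly into $U(k+1)=\fracsm{k^2-k+1/2}{2k(2k+1)}U(k)$. As $\fracsm{k^2-k+1/2}{2k(2k+1)}<\fracsm14$ for all $k\ge1$, this yields $U(k)=O(4^{-k})=O(2^{-n})$, hence the claimed bound.

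I expect the only genuinely non-routine step to be the first one: recognising that the root set is symmetric about $0$ and passing to the square of the variable. This single observation both makes the odd-dimensional sum vanish identically and reveals that the even-dimensional coefficients are \emph{precisely} the $e_{\alpha,k}$ of the Dirac analysis. Once that identification is in place, the sign pattern and the recursion already established in Section~\ref{sec:Dirac} do all the remaining work, and the geometric weight $4^{-\alpha}$ supplied by $\al_n=\fracsm12$ is exactly what keeps the recursion ratio below $\fracsm14$, converting the factorial normalisation $\fracsm{2}{(n-1)!}$ into the rate $O(2^{-n})$.
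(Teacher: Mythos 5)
Your proof is correct and follows essentially the same route as the paper: the odd case via the symmetry of the root set making the defining polynomial odd in $x$, and the even case via the sign pattern and two-term recursion for the even-index coefficients (which you additionally identify as the $e_{\alpha,k}$ of Section~\ref{sec:Dirac}, whereas the paper rederives the recursion directly for $\tilde N_{2r-2}(2k)$). One point in your favour: your recursion ratio $\frac{k^2-k+1/2}{2k(2k+1)}<\frac{1}{4}$ for all $k\ge 1$ actually delivers the stated rate $O(4^{-k})=O(2^{-n})$, while the paper's displayed estimate $D(k+1)\le\frac{1}{2}D(k)$, taken literally, only yields $O(2^{-n/2})$.
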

\begin{remark}
Interestingly enough, this term is exactly the ``correction term''
from a product formula of certain determinants discussed in
\cite{Dow1}, and this makes it particularly interesting to see that
this tends to zero. The convergence is exponential and this means that
the anomaly, in this situation quickly evaporates as $n\to\infty$.
\end{remark}
\begin{proof}
If $n=2k+1$ is odd we exploit symmetry and rewrite the left-hand-side in the defining equation for $\tilde{N}_{r}(n)$
\[
x\prod_{p=1}^{k-1}\Big(x^2-p^2\Big)=\sum_{r=0}^{2k-1}\tilde{N}_{r}(n)x^{r},
\]
showing the vanishing of all terms with even $r$.

If $n=2k$ is even we again use symmetry to write
\[
\prod_{p=1}^{k-1}\Big(x^2-(p-\fracsm{1}{2})^2\Big)=\sum_{r=0}^{k-1}\tilde{N}_{2r}(2k)x^{2r}.
\]
This shows that $\sign[\tilde{N}_{2r-2}(2k)]=(-1)^{k+r}$ and gives as usual a recurrence relation
\[
\tilde{N}_{2r-2}(2k+2)=\tilde{N}_{2r-4}(2k)-(k-\fracsm{1}{2})^2\tilde{N}_{2r-2}(2k).
\]
We estimate
\[
\bigg\vert\sum_{j=1}^{k}\frac{\alpha_{2k}^{2j}}{j}N_{2j}(2k)\sum_{i=0}^{j-1}\frac{1}{2i+1}\bigg\vert\leq
2D(k),\quad D(k):=\sum_{j=1}^{k}\alpha_{2k}^{2j}(-1)^{k+j}\frac{\tilde{N}_{2j-2}(2k)}{(2k-1)!}.
\]
Letting $\alpha_n=\fracsm{1}{2}$ we find the recursive estimates
\begin{equation}
\begin{split}
D(k+1)&=\sum_{j=1}^{k+1}\big(\fracsm{1}{2}\big)^{2j}(-1)^{k+j}\frac{\tilde{N}_{2j-2}(2k+2)}{(2k+1)!}\\
&=\frac{1}{2k(2k+1)}\bigg\{\frac{1}{4}+(k+\fracsm{1}{2})^2\bigg\}D(k)\leq \frac{1}{2}D(k),
\end{split}
\end{equation}
for sufficiently large $k$, proving exponential convergence to zero and ending the proof.
\end{proof}

\section{Determinants on rescaled spheres}
\label{sec:rescaled}
It may be argued that taking the spheres with the radius unity
standard metrics is not as natural for the problem of determinants, and
that rescaling to unit volume is more appropriate. For the conformally
covariant Yamabe and Dirac operators, this is of course produces no
change, and focus remains on the ordinary Laplacian. Here the rescaling indeed cancels the leading term $\log n$ in the zeta derivative at zero.

If we rescale the metric $g$ by the constant $\lambda>0$ to $\tilde{g}:=\lambda^2g$ the determinant of the ordinary Laplacian changes as
\begin{equation}\label{DetChange}
\zeta'_\lambda(0)=\zeta'(0)+2\log\lambda\cdot\zeta(0)=\zeta'(0)-2\log\lambda,
\end{equation}
where the last equality is due to $\zeta(0)=-1$. To rescale the spheres we use
\[
\lambda(n)=\big(\vol(S^n)\big)^{-\frac{1}{n}}=\Bigg(\frac{2\pi^{\frac{n+1}{2}}}{\Gamma(\frac{n+1}{2})}\Bigg)^{-\frac{1}{n}},
\]
which with Stirling's formula
\[
\sqrt{2\pi}n^{n+\fracsm{1}{2}}e^{-n+\fracsm{1}{12n+1}}<n!<\sqrt{2\pi}n^{n+\fracsm{1}{2}}e^{-n+\fracsm{1}{12n}}
\]
gives
\[
\begin{split}
&\log\lambda(n)\\
&\quad=-\frac{1}{n}\Big\{\frac{n+1}{2}\log\pi-\fracsm{1}{2}\log\fracsm{\pi}{2}-\frac{n}{2}\log\frac{n-1}{2}+\frac{n-1}{2}+\frac{1}{6(n-1)}\Big\}+O(n^{-3})\\
&\quad=\frac{1}{2}\log n-\frac{1}{2}(1+\log(2\pi))+O(n^{-1}).
\end{split}
\]
Inserting in (\ref{DetChange}) gives
\[
\zeta'_{\mathrm{rescaled}}(0)=\zeta'(0)-\log n+1+\log(2\pi)+O(n^{-1}),
\]
thus cancelling the leading order term $\log n$ from the case of
radius unity spheres. By the above along with Theorem \ref{LaplaceTheorem}, we find the new limit and asymptotics.
\begin{corollary}
On the standard spheres rescaled to unit volume, we have the limit
\[
\lim_{k\to\infty}\det(\Delta, S_\mathrm{rescaled}^{2k+1})=\frac{1}{2\pi e}.
\]
In fact the asymptotics can be displayed as (for $n=2k+1$)
\[
\zeta'_{\Delta, S^{n}}(0)=1+\log(2\pi)+O\Big(\frac{\log\log n}{\log n}\Big).
\]
\end{corollary}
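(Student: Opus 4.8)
The plan is to read off the corollary directly from the two ingredients already assembled: the odd-dimensional asymptotic of Theorem \ref{LaplaceTheorem} for the canonical Laplacian, and the rescaling identity (\ref{DetChange}) together with the Stirling expansion of $\log\lambda(n)$ computed just above. Recall that the functional determinant of the scalar Laplacian is $\det=\exp(-\zeta'(0))$, so that once the limit of $\zeta'_{\mathrm{rescaled}}(0)$ is pinned down, the value of the determinant follows by exponentiation. No new analytic estimates are required; the content is entirely in matching the error terms and observing the cancellation of the leading logarithm.

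First I would substitute the asymptotic $\zeta'_{\Delta, S^{2k+1}_{\mathrm{can}}}(0)=\log n+O\big(\fracsm{\log\log n}{\log n}\big)$ from Theorem \ref{LaplaceTheorem} into the rescaling formula
\[
\zeta'_{\mathrm{rescaled}}(0)=\zeta'(0)-\log n+1+\log(2\pi)+O(n^{-1}),
\]
which was obtained from (\ref{DetChange}) and the Stirling estimate for $\log\lambda(n)$. The two $\log n$ terms cancel exactly, leaving
\[
\zeta'_{\mathrm{rescaled}}(0)=1+\log(2\pi)+O\Big(\fracsm{\log\log n}{\log n}\Big)+O(n^{-1}).
\]
Since $n^{-1}=o\big(\fracsm{\log\log n}{\log n}\big)$, the smaller error is absorbed into the larger one, and we obtain precisely the claimed asymptotics $\zeta'_{\Delta, S^{n}}(0)=1+\log(2\pi)+O\big(\fracsm{\log\log n}{\log n}\big)$ for $n=2k+1$.

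Finally I would let $n=2k+1\to\infty$, so that the error term vanishes and $\zeta'_{\mathrm{rescaled}}(0)\to 1+\log(2\pi)$. Exponentiating gives
\[
\lim_{k\to\infty}\det(\Delta, S^{2k+1}_{\mathrm{rescaled}})=\exp\big(-(1+\log(2\pi))\big)=\frac{1}{2\pi e},
\]
which is the asserted limit. The only point worth flagging is pure bookkeeping: one must confirm that the odd-dimensional error rate $O(\fracsm{\log\log n}{\log n})$, rather than the exponentially small even-dimensional behavior, is the relevant one here, and that it dominates the $O(n^{-1})$ coming from Stirling's formula. The genuine difficulty of this section lies entirely upstream, in establishing the $\log n$ growth of $\zeta'_{\Delta, S^{2k+1}_{\mathrm{can}}}(0)$ through the contour-integral and Laplace/stationary-phase analysis of Proposition \ref{nProposition}; once that is in hand, the present corollary is immediate.
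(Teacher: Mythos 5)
Your proposal is correct and follows exactly the paper's own route: substitute the odd-dimensional asymptotic of Theorem \ref{LaplaceTheorem} into the rescaling identity (\ref{DetChange}) with the Stirling expansion of $\log\lambda(n)$, cancel the leading $\log n$, absorb the $O(n^{-1})$ into the $O(\log\log n/\log n)$ error, and exponentiate via $\det(\Delta)=\exp(-\zeta'_\Delta(0))$ to obtain $1/(2\pi e)$. No discrepancies to report.
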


\section{Proof of a stationary phase lemma}
\label{sec:proof}
\begin{proof}[Proof of Lemma \ref{stationaryphaselemma}]
Let $s_a:=\tan^{-1}(e^{-a}), s_b:=\tan^{-1}(e^{-b})$ and compute
\begin{equation}\label{LemmaCompute}
\begin{split}
&\int_a^b\sin\big(n\tan^{-1}(e^{-x})\big)\big(1+e^{-2x}\big)^{-\fracsm{n}{2}}\varphi(x)dx\\
&=\int_{s_b}^{s_a}\sin\big(nx\big)\big(1+\tan^{2}x\big)^{-\fracsm{n-2}{2}}\frac{\varphi\big(\log\big(\frac{1}{\tan x}\big)\big)}{\tan x}dx\\
&=\sum_{k=1}^{n_{ab}}\int_{s_b+\fracsm{(k-1)\pi}{n}}^{s_b+\fracsm{k\pi}{n}}\sin\big(nx\big)\big(1+\tan^{2}x\big)^{-\fracsm{n-2}{2}}\frac{\varphi\big(\log\big(\frac{1}{\tan x}\big)\big)}{\tan x}dx\\
&+\int_{s_b+\frac{n_{ab}\pi}{n}}^{s_a}\sin\big(nx\big)\big(1+\tan^{2}x\big)^{-\fracsm{n-2}{2}}\frac{\varphi\big(\log\big(\frac{1}{\tan x}\big)\big)}{\tan x}dx,
\end{split}
\end{equation}
where $n_{ab}:=\Big\lfloor \frac{n}{\pi}(s_a-s_b)\Big\rfloor$. Thus the last term above can be estimated by
\[
\begin{split}
&\int_{s_b+\frac{n_{ab}\pi}{n}}^{s_a}\Bigg\vert\big(1+\tan^{2}x\big)^{-\fracsm{n-2}{2}}\frac{\varphi\big(\log\big(\frac{1}{\tan x}\big)\big)}{\tan x}\Bigg\vert dx\leq C_{\varphi,a}\Big(1+\tan^2\big(s_a-\frac{\pi}{n}\big)\Big)^{-\fracsm{n-2}{2}},
\end{split}
\]
when $n\geq\fracsm{\pi}{s_a}$, and is $O(n^{-\infty})$ as $n\to\infty$, for any $k$. Note as claimed, that the error term is only uniform in the parameter $b$. The first term in (\ref{LemmaCompute}) is written as a sum over the half-periods of the sine function. Fixing $n$, the alternating behavior gives the estimate
\[
\begin{split}
&\Bigg\vert\sum_{k=1}^{n_{ab}}\int_{s_b+\fracsm{(k-1)\pi}{n}}^{s_b+\fracsm{k\pi}{n}}\sin\big(nx\big)\big(1+\tan^{2}x\big)^{-\fracsm{n-1}{2}}\frac{\varphi\big(\log\big(\frac{1}{\tan x}\big)\big)}{\tan x}dx\Bigg\vert\\
&\leq\int_{s_b}^{s_b+\fracsm{\pi}{n}}\big\vert\sin\big(nx\big)\big\vert\big(1+\tan^{2}x\big)^{-\fracsm{n-1}{2}}\frac{\varphi\big(\log\big(\frac{1}{\tan x}\big)\big)}{\tan x}dx\\
&\leq\frac{\fracsm{\pi}{n}}{\tan s_b}\max_{s_b\leq x\leq s_b+\fracsm{\pi}{n}}\varphi\big(\log\big(\fracsm{1}{\tan x}\big)\big).
\end{split}
\]
Similar computations apply for the contribution from the real part of the integral.
\end{proof}

\begin{acknowledgements} The author would like to thank C. B\"{a}r and S. Schopka for raising the question of limits in their paper \cite{BS}. Thanks goes to Kate Okikiolu, for useful criticism of the manuscript, and to Department of Mathematics, University of Pennsylvania, Philadelphia for having the author as visiting graduate student during the creation of this paper.

The work is partly supported by my Elite Research Scholarship 2006, from The Danish Ministry of Science, Technology and Innovation.
\end{acknowledgements}

\bibliographystyle{amsalpha}

\end{document}